\journal{}
\newtheorem{theorem}{Theorem}
\newtheorem{lemma}[theorem]{Lemma}
\newdefinition{rmk}{Remark}
\newdefinition{definition}{Definition}
\newdefinition{corollary}{Corollary}
\newdefinition{observation}{Observation}
\newproof{proof}{Proof}
\newcommand{\np}{\mathsf{NP}}
\newcommand{\opt}{\mathsf{OPT}}
\newcommand{\opts}{\mathsf{S^{OPT}}}
\newcommand{\el}{\mathsf{L}}
\newcommand{\mis}{\mathsf{MIS}}
\newcommand{\ul}{\ulcorner}
\newcommand{\ur}{\urcorner}
\newcommand{\llc}{\llcorner}
\newcommand{\lr}{\lrcorner}
\newcommand{\rev}[1]{{\color{black}#1}}
    \renewcommand*{\@fnsymbol}[1]{\ensuremath{\ifcase#1\or *\or \dagger\or \S\or
       \mathsection\or \mathparagraph\or \|\or **\or \dagger\dagger
       \or \ddagger\ddagger \else\@ctrerr\fi}}
\begin{document}

\begin{frontmatter}

\title{Computing Maximum Independent Set on  Outerstring Graphs and Their Relatives}
\tnotetext[mytitlenote]{A preliminary version of this paper appeared in the 16th International Symposium on Algorithms and Data Structures (WADS 2019)~\cite{BoseCKMMMS19}. Research of Prosenjit Bose, Anil Maheshwari, Debajyoti Mondal and Michiel Smid is supported in part by NSERC. Part of this work was done when Saeed Mehrabi was visiting the University of Saskatchewan.}

\author[add1]{Prosenjit Bose}
\ead{jit@scs.carleton.ca}
\author[add2]{Paz Carmi}
\ead{carmip@cs.bgu.ac.il}
\author[add3]{J. Mark Keil}
\ead{keil@cs.usask.ca}
\author[add1]{Anil Maheshwari}
\ead{anil@scs.carleton.ca}
\author[add1]{Saeed Mehrabi}
\ead{}
\author[add3]{Debajyoti Mondal}
\ead{d.mondal@usask.ca}
\author[add1]{Michiel Smid}
\ead{michiel@scs.carleton.ca}

\address[add1]{School of Computer Science, Carleton University, Ottawa, Canada.}
\address[add2]{Department of Computer Science, Ben-Gurion University of the Negev, Beer-Sheva, Israel.}
\address[add3]{Department of Computer Science, University of Saskatchewan, Saskatoon, Canada.}

\begin{abstract}
A graph $G$ with $n$ vertices is called an \emph{outerstring graph} if it has an intersection representation with a set of $n$ curves inside a disk such that one endpoint of every curve is attached to the boundary of the disk. Given an outerstring graph representation of $G$ with $s$ segments, a \emph{Maximum Independent Set} ($\mis$) of $G$ can be \rev{computed} in $O(s^3)$ time (Keil et al., Comput. Geom., 60:19--25, 2017). 

We examine the fine-grained complexity of the $\mis$ problem on some well-known outerstring representations (e.g., line segments, $\el$-shapes, etc.), where the strings are of constant size. We show that computing $\mis$ on grounded segment and grounded square-$\el$ representations is at least as hard as computing $\mis$ on circle graph representations. Note that no $O(n^{2-\delta})$-time algorithm, $\delta>0$, is known for computing $\mis$ on circle graphs. For the grounded string representations, where the strings are $y$-monotone simple polygonal paths of constant length with segments at  integral coordinates, we solve $\mis$ in $O(n^2)$ time and show this to be the  best possible under the Strong Exponential Time Hypothesis. For the intersection graph of $n$ $\el$-shapes in the plane, we give a $(4\cdot \log \opt)$-approximation algorithm for $\mis$ (where $\opt$ denotes the size of an optimal solution), improving the previously best-known $(4\cdot \log n)$-approximation algorithm of Biedl and Derka (WADS 2017).
\end{abstract}

\begin{keyword}
Maximum independent set problem; Outerstring graphs; Fine-grained complexity; Circle graphs.
\end{keyword}

\end{frontmatter}

\section{Introduction}
\label{sec:introduction}
Let $G=(V,E)$ be an undirected graph with \rev{$|V |=n$}. The graph $G$ is \emph{weighted} if each \rev{vertex in $V$} 
is associated with a non-negative value, called its \emph{weight}. A set \rev{$S\subseteq V$} is an \emph{independent set} if no two vertices in $S$ are adjacent. The objective of the \emph{Maximum Independent Set} ($\mis$) problem is to compute a maximum-cardinality independent set of $G$. \rev{The \emph{weighted $\mis$} problem seeks for an independent set $S$ that maximizes the sum of the weights of the vertices in $S$}.  The $\mis$ problem is $\np$-\rev{hard} and it is known that no approximation algorithm with approximation factor within $|V |^{1-\epsilon}$ is possible for any $\epsilon>0$, unless $\mathsf{P}=\np$~\cite{DBLP:conf/focs/Hastad96}. The inapproximability of the $\mis$ problem has motivated a rich body of research to study the $\mis$ problem on the intersection \rev{graphs} of geometric objects. 

Let $O$ be a set of $n$ geometric objects in the plane. Then the \emph{intersection graph} of $O$ has the objects in $O$ as its vertices and two vertices $o_i,o_j\in O$ are adjacent in the graph if and only if $o_i\cap o_j\ne \emptyset$. If $O$ is a set of curves in the plane (resp., a set of chords of a circle), then the intersection graph of $O$ is called a \emph{string graph} (resp., \emph{circle graph}); see Figure~\ref{fig:graphClasses}(b--c) for an example. 

\subsection{Background}

Here we briefly review the related research on computing $\mis$ for  various  classes of intersection graphs, which motivated our work. 

The $\mis$ problem is NP-hard for string graphs, even when the strings are straight line segments~\cite{KratochvilN90}. For string graphs with any two strings intersecting at most a constant number of times, Fox and Pach~\cite{DBLP:conf/soda/FoxP11} gave an approximation algorithm with an approximation factor of $n^\epsilon$. 
 The $\mis$ problem has been studied on intersection \rev{graphs} of other geometric objects such as disks and squares~\cite{DBLP:journals/siamcomp/ErlebachJS05}, rectangles~\cite{DBLP:conf/soda/ChalermsookC09},  general convex objects~\cite{DBLP:journals/comgeo/AgarwalM06}, and pseudo-disks~\cite{DBLP:journals/dcg/ChanH12}.

In this paper we   examine the $\mis$ problem on the class of outerstring graphs, which is  defined as follows.

\begin{definition}[Outerstring Graph~\cite{DBLP:journals/jct/Kratochvil91}]
Graph $G$ is called an \emph{outerstring graph} if it is \emph{an}  intersection graph of a set of curves that lie inside a disk such that each curve intersects the boundary of the disk in one of its endpoints.
\end{definition}

Figure~\ref{fig:graphClasses}(d) shows an example of an outerstring graph. A string representation of a graph is called \emph{grounded}, if one endpoint of each string is attached to a \emph{grounding line} $\ell$ and all strings lie on one side of $\ell$. For example, a graph $G$ is called a \emph{grounded segment graph}, if it is the intersection graph of a set of segments such that each segment is attached to a grounding line $\ell$ at one of its endpoints and all segments lie on one side of $\ell$; see Figure~\ref{fig:graphClasses}(e).

\begin{figure}[t]
\centering
\includegraphics[width=\textwidth]{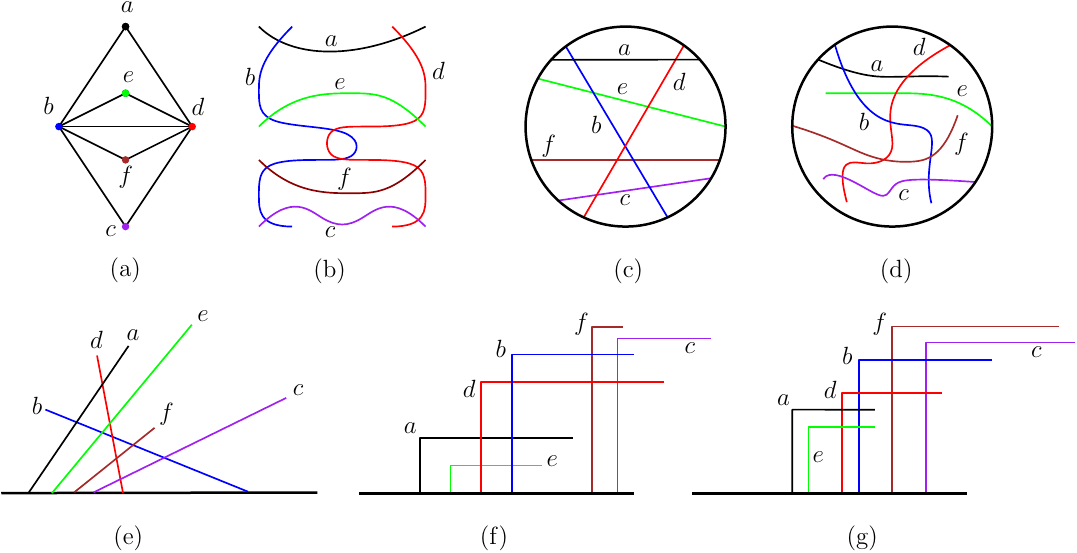}
\caption{(a) A graph $G$ with six vertices. (b) A string graph, (c) a circle graph, (d) an outerstring graph, (e) a \rev{grounded} segment graph, (f) a grounded $\el$, and (g) a grounded square-$\el$ representation  of $G$.}
\label{fig:graphClasses}
\end{figure}

We will relate the time-complexity of computing $\mis$ in outerstring graphs to the time-complexity of computing $\mis$ \rev{in} circle graph representations. Gavril~\cite{DBLP:journals/ipl/Gavril73} presented an $O(n^3)$\rev{-time} algorithm for solving the $\mis$ problem on circle graphs. Subsequent improvement reduced the \rev{time} complexity to $O(n^2)$~\cite{Supowit87,AsanoIM91}. Several algorithms exist with running time sensitive to various graph  parameters, e.g.,   $O(nd)$ time~\cite{DBLP:journals/dam/ApostolicoAH92,DBLP:conf/isaac/Valiente03}, or $O(n \min\{d, \alpha\})$ time~\cite{DBLP:journals/ipl/NashG10}. Here $d$ is a parameter known as the density of the circle graph, and $\alpha$ is the independence number of the circle graph. However, no truly subquadratic-time algorithm (i.e., an $O(n^{2-\delta})$-time algorithm where $\delta>0$) is known for the $\mis$ problem on circle graphs.

Although recognizing an intersection graph may require $\Theta(n^2)$ time (since there could be $\Theta(n^2)$ edges), the $\mis$ problem can be solved faster if an intersection representation is given. For example, $\mis$ in an interval graph representation can be \rev{computed} in $O(n)$ time~\cite{Frank75}. Although recognizing grounded segment graphs is $\exists\mathbb{R}$-complete~\cite{DBLP:journals/jgaa/CardinalFMTV18},  given an outerstring representation \rev{with strings as polygonal chains of straight line segments}, one can solve the weighted $\mis$ problem in $O(s^3)$ time, where $s$ is the number of segments in the representation~\cite{DBLP:journals/comgeo/KeilMPV17}. For grounded segment graphs, this yields a time complexity of $O(n^3)$, where $n$ is the number of vertices in the grounded segment graph. Although the strings in a grounded segment graph are straight line segments, no faster algorithm is known for this case. Thus a natural question is to ask whether one can prove non-trivial lower bounds on the time complexity of the $\mis$ problem for outerstring graphs or simpler variants of such graphs.

An $\el$-shape is the union of a vertical segment and a horizontal segment that share an endpoint; hence, there are four possible types of $\el$-shapes: $\{\ul,\ur,\llc,\lr\}$. A graph is called a \emph{B$_1$-VPG graph} if it is the intersection graph of a set of $\el$-shapes in the plane. This class of string graphs belongs to a larger class called the Vertex intersection of Paths on a Grid (VPG) and denoted by B$_k$-VPG, where $k$ indicates the maximum number of bends each path can have in the grid representation~\cite{DBLP:journals/jgaa/AsinowskiCGLLS12}. These graphs and their relatives have been studied extensively in terms of recognition problems (e.g., see~\cite{DBLP:conf/soda/GoncalvesIP18,DBLP:journals/dam/FelsnerKMU16,DBLP:conf/wg/ChaplickJKV12,DBLP:journals/jgaa/AsinowskiCGLLS12}). Recently,  there has been an increasing attention on studying optimization problems on these graphs; see~\cite{DBLP:conf/mfcs/BandyapadhyayM018,DBLP:conf/waoa/Mehrabi17,DBLP:conf/wads/BiedlD17a,DBLP:journals/corr/Mehrabi17} and the references therein. For the $\mis$ problem, it is known that the problem is $\np$-\rev{hard} on B$_k$-VPG graphs even when $k=1$ \rev{ and a representation is given}~\cite{DBLP:conf/cocoa/LahiriMS15}. The previously best-known approximation  algorithms \rev{(for the $\mis$ problem on B$_1$-VPG graphs)} have factor $4\cdot\log n$~\cite{DBLP:conf/wads/BiedlD17a,DBLP:journals/corr/Mehrabi17}. Combining B$_1$-VPG and grounded string graphs, we consider the $\mis$ problem on \emph{grounded $\el$} and \emph{grounded square-$\el$} graphs.

\begin{definition}[Grounded $\el$ and Grounded Square-$\el$ Graphs.]
 \rev{A graph} $G$ is called a \emph{grounded $\el$ graph} if $G$ is the intersection \rev{graph} of a set of $\el$-shapes such that each $\el$-shape is of type $\ul$ and the lower endpoint of the vertical segment of each $\el$-shape is attached to a grounding line $\ell$. If the vertical and horizontal segments of every $\el$-shape in a grounded $\el$ representation of $G$ have the same length, then we call $G$ a \emph{grounded square-$\el$ graph}.
\end{definition}

See Figure~\ref{fig:graphClasses}(f--g) for examples of these graphs. Finally, for the $\mis$ problem on a set of $n$ rectangles, Chalermsook and Chuzhoy~\cite{DBLP:conf/soda/ChalermsookC09} gave \rev{a}  $(\log \log n)$-approximation algorithm for the unweighted version of the problem. For the weighted version of the problem, the best approximation factor is $O(\log n/\log \log n)$ due to Chan and Har-Peled~\cite{DBLP:journals/dcg/ChanH12}.  
\rev{A collection of geometric objects in the plane are called \emph{pseudo-disks} if  for every pair of these objects, the  boundaries intersect at most twice. 
For pseudo-disks in the plane, there exists a  PTAS for the unweighted case and a
constant-factor approximation for the weighted case~\cite{DBLP:journals/dcg/ChanH12}. Thus for restricted cases, where a set of $\el$-shapes in the plane can be extended to form a collection of pseudo-disks that represents the same intersection graph, one can leverage these algorithms to compute approximate solutions.}

\subsection{Contributions} We now summarize our contribution in $C_1$--$C_3$.

\paragraph{C1 (Section~\ref{sec:gsr})} We first examine the time-complexity of the $\mis$ problem on the grounded segment graphs with respect to its relation to the $\mis$ problem in circle graphs. Middendorf and Pfeiffer~\cite{DBLP:journals/dm/MiddendorfP93} showed that every intersection graph of $\el$-shapes of types $\ul$ and $\llc$ (not necessarily grounded) can be transformed into  a segment representation.  If the $\el$-shapes  are grounded, then the transformation yields a grounded segment graph. Since every circle graph is a grounded $\el$ graph~\cite{DBLP:journals/corr/abs-1808-04148}, they are also grounded segment graphs. However, the transformation~\cite{DBLP:journals/dm/MiddendorfP93} into the grounded segment representation is by an inductive proof, and it is unclear whether the constructed  representation can be encoded in a subquadratic number of bits. We show that the $\mis$ problem in a circle graph representation is $O(n\log n)$-time reducible to the $\mis$ problem in an implicit representation of a grounded segment graph, where the representation takes $O(n\log n)$ bits. This indicates that solving $\mis$ in such grounded segment representations is as hard as solving $\mis$ in circle graph representations.

\paragraph{C2 (Sections~\ref{sec:gSquare}--\ref{sec:dp})} Since grounded $\el$ graphs include  circle graphs, we examined a simpler variant: grounded square-$\el$ graphs. We show that there exist grounded square-$\el$ graphs (resp., grounded $\el$ graphs) that are not circle graphs (resp., grounded square-$\el$ graphs). Although grounded square-$\el$ is a  simpler variant, we prove that it includes the circle graphs. In fact, we give an $O(n\log n)$-time reduction, showing that $\mis$ in grounded square-$\el$ representations is at least as hard as $\mis$ in circle graph representations \rev{under the  assumption that finding an $\mis$ in circle graph requires $\Omega(n\log n)$ time}. In contrast, for the grounded string representations where the strings are $y$-monotone simple polygonal paths of constant length with segments at integral coordinates, we can solve $\mis$ in $O(n^2)$ time. Assuming the Strong Exponential Time Hypothesis (SETH)~\cite{DBLP:journals/jcss/ImpagliazzoPZ01}, we show that 
 \rev{ the $\mis$ problem on an outerstring representation with $s$ strings cannot be solved in $O(s^{2-\delta})$ time, even when each string has one bend.}


\paragraph{C3 (Section~\ref{sec:approxAlg})} We give a $(4\cdot \max\{1,\log \opt\})$-approximation algorithm for the weighted $\mis$ problem on the intersection graph of a set of $n$ $\el$-shapes in the plane. This improves the previously best-known algorithm, which has an approximation factor of $4\cdot\log n$~\cite{DBLP:conf/wads/BiedlD17a,DBLP:journals/corr/Mehrabi17}. Moreover, our algorithm can be used to obtain a simple $(4\cdot \max\{1,\log \opt\})$-approximation algorithm for the weighted $\mis$ problem on a set of $n$ axis-parallel rectangles in the plane.

\section{$\mis$ on Grounded Segment Representations}
\label{sec:gsr}
In this section, we show that the $\mis$ problem in a circle graph representation is $O(n\log n)$-time reducible to the $\mis$ problem in a representation of a grounded segment graph, where the representation takes $O(n\log n)$ bits. This indicates that solving $\mis$ on grounded segment   representations could be as hard as solving $\mis$ on circle graph representations. 

An \emph{overlap graph} is an intersection graph of intervals, where two vertices are adjacent if and only if their corresponding intervals properly \rev{intersect} (i.e., the intersection is non-empty but 
neither \rev{interval} contains the other). Gavril~\cite{DBLP:journals/ipl/Gavril73} showed that a graph is a circle graph if and only if it is an overlap graph. Given the circle graph representation, one can find an overlap representation in linear time by computing the shadow of each chord on a horizontal line below the circle, assuming the point light source is at the apex of the circle as illustrated in Figure~\ref{fig:circletograoundedseg}(a--c). \rev{We now show that} 
 the overlap representation can be transformed into a grounded segment representation in \rev{$O(n \log n)$} time.

\begin{figure}[pt]
\centering
\includegraphics[width=.75\textwidth]{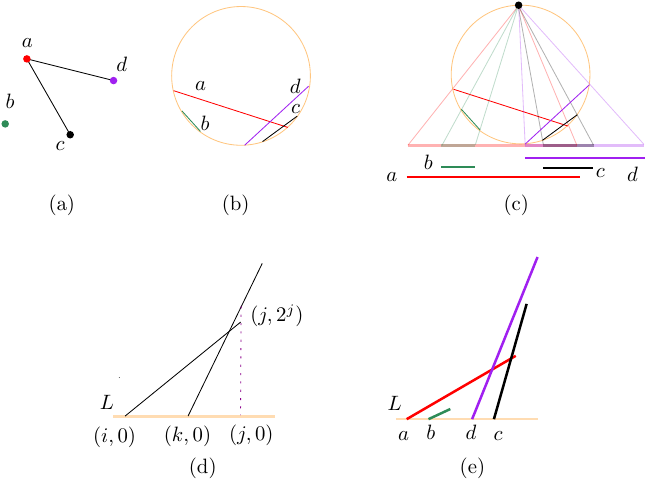}
\caption{(a) A circle graph $G$. (b) A circle graph representation of $G$. (c) Transformation into an overlap graph. (d)-(e) Transformation into a grounded segment graph\rev{, where $L$ is the ground line}. We only show a schematic representation for space constraints.}
\label{fig:circletograoundedseg}
\end{figure}

We assume that the circle graph representation is non-degenerate, i.e., no two chords share a common endpoint. Consequently, the overlap representation is also non-degenerate. We now sort the endpoints of the intervals and relabel them with integral coordinates. For each interval $[i,j]$ in the overlap graph, we define a line segment with coordinates $(i,0),(j,2^j)$. Note that all the segments are grounded at the line $y=0$; i.e., line $L$ in Figure~\ref{fig:circletograoundedseg}(d). Moreover, it is straightforward to encode the representation implicitly in $O(n\log n)$ bits (note that an explicit representation would require $O(n^2)$ bits). Let the resulting representation be $\mathcal{R}$. In the proof of the following theorem we show that $\mathcal{R}$ is the required grounded segment representation.

\begin{theorem}
\label{thm:seg}
Given a circle graph representation with $n$ chords, in $O(n\log n)$ time one can transform it into an implicit grounded segment representation, which uses $O(n\log n)$ bits. Thus, the $\mis$ problem on grounded segment representations is at least as hard as the $\mis$ problem on circle graph representations.
\end{theorem}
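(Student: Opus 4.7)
The plan is to argue correctness of the representation $\mathcal{R}$ defined above and to check that its production meets the claimed time and space bounds. The first two steps---producing an overlap representation from the given circle graph representation via the shadow construction, and then relabeling the endpoints as integers $1,2,\ldots,2n$---are routine: the shadow step is linear, the sorting/relabeling is $O(n\log n)$, and together they preserve the graph by Gavril's theorem that circle graphs are precisely overlap graphs. The bit-complexity of $\mathcal{R}$ is $O(n\log n)$ because we need only store the pair $(i,j)$ for each interval (each in $O(\log n)$ bits), and the exponentially large height is never materialized: it is left implicit for the algorithm that consumes $\mathcal{R}$ to inspect via standard comparisons between exponential terms.

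The core of the proof is a single claim: for two intervals $[a_1,b_1]$ and $[a_2,b_2]$ with $a_1<a_2$, the corresponding segments in $\mathcal{R}$ intersect if and only if the intervals properly overlap, i.e.\ $a_2<b_1<b_2$. I would verify this by case analysis on the relative position of $b_1$ and $b_2$. The disjoint case ($b_1<a_2$) is immediate since the two segments' $x$-ranges do not meet. For the proper-overlap case, I would compare $s_1$ and $s_2$ at the two endpoints of their shared $x$-range $[a_2,b_1]$: at $x=a_2$, $s_1$ is strictly positive while $s_2$ is zero; at $x=b_1$ an exponential-over-linear inequality forces $s_2(b_1)\geq s_1(b_1)$, so the two linear segments cross. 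For the containment case ($b_2<b_1$), an analogous exponential inequality gives $s_1(b_2)>s_2(b_2)$, and since $s_1-s_2$ is an affine function that is strictly positive at both endpoints of $[a_2,b_2]$, it stays strictly positive throughout, so no intersection occurs.

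The main obstacle is discharging the exponential inequalities uniformly over all valid configurations of the endpoints. Because the relabeled endpoints are distinct integers, differences such as $b_1-b_2$ or $a_2-a_1$ are at least $1$, and this integrality is exactly what is needed so that the exponential factor dominates linear ratios of the form $(b_1-a_2)/(b_2-a_2)$ or $(b_2-a_1)/(b_1-a_1)$; the required estimates reduce to an elementary inequality such as $c\cdot(2^d-1)\geq d$ for positive integers $c,d$. Once these inequalities are verified, the reduction is complete: $\mathcal{R}$ is the grounded segment representation of the input circle graph, so any $T(n)$-time algorithm for $\mis$ on grounded segment representations of this kind yields an $O(T(n)+n\log n)$-time algorithm for $\mis$ on circle graphs, establishing the claimed hardness transfer.
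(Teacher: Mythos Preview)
Your proposal is correct and follows essentially the same approach as the paper's proof: both verify the three cases (disjoint, proper overlap, containment) by comparing the heights of the two segments at the endpoints of their shared $x$-range and then discharging an exponential-versus-linear inequality that relies on the integrality of the relabeled endpoints. Your treatment of the containment case is in fact slightly more explicit than the paper's, since you invoke the affine nature of $s_1-s_2$ to argue that positivity at both endpoints of $[a_2,b_2]$ forces positivity throughout, whereas the paper checks only the right endpoint.
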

\begin{proof}
Consider the representation $\mathcal{R}$ constructed from the overlap representation of the circle graph. It is straightforward to observe that if two intervals do not intersect in the overlap graph, then the corresponding segments do not intersect in $\mathcal{R}$. We now need to prove that if two intervals properly intersect, then the corresponding segments intersect in $\mathcal{R}$; otherwise, one interval contains the other and the segments do not intersect in $\mathcal{R}$.

Let $[i,j]$ and $[k,\ell]$ be two intervals that properly intersect; i.e., $i<k<j<\ell$, and let $s_{[i,j]}$ and $s_{[k,\ell]}$ be the corresponding segments. Note that $s_{[i,j]}$ intersects the line $x=j$ at  height $2^j$. Hence, $s_{[k,\ell]}$ will intersect $s_{[i,j]}$ if it intersects the line $x=j$ at the same or  a higher point. Therefore, we need to show that  $\frac{2^\rev{\ell}}{\rev{\ell}-k} (j-k) \ge 2^j$ holds. Observe that
\[
\frac{2^\rev{\ell}}{\rev{\ell}-k} (j-k) \ge 2^j
  \Leftrightarrow (j-k) - \frac{(\rev{\ell}-k)}{2^{\rev{\ell}-j}} \ge 0  
  \Leftrightarrow 1-\left(\frac{(\rev{\ell}-j)}{(j-k)2^{\rev{\ell}-j}} +\frac{1}{2^{\rev{\ell}-j}}\right)   \ge 0.
\]
Since $(\ell-j)\ge 1$ and $(j-k)\ge 1$, the above condition will hold for any integral $j,k,\ell$, and hence the segments will intersect.

Finally, if the interval $[i,j]$ contains the interval $[k,\ell]$, i.e., $i<k<\ell<j$, then the height of 
$s_{[k,\ell]}$ at $x=\ell$ is $2^\ell$, whereas the height of $s_{[i,j]}$ is $\frac{2^j}{j}\ell = 2^\ell\cdot 2^{j-\ell}\left(\frac{\ell}{j}\right) = 2^\ell\left(\frac{\ell2^{j}}{j2^{\ell}}\right)$. Since $j>\ell$, for any integral   $j,\ell$, the height of $s_{[i,j]}$ at $x=\ell$ will be larger than that of $s_{[k,\ell]}$. Hence, the segments will not intersect.~$\Box$
\end{proof}

\section{$\mis$ on Grounded Square-$\el$ Representations}
\label{sec:gSquare}
In this section, we show that solving $\mis$ in a circle graph representation is $O(n\log n)$-time reducible to solving $\mis$ in a grounded square-$\el$ representation.

Given a circle graph representation, we first compute the corresponding overlap graph in the same way as we did in Section~\ref{sec:gsr}, and relabel the endpoints with integral coordinates from $0$ to $2n-1$. Since we assume the circle graph representation to be non-degenerate, no two endpoints in the overlap graph share the same $x$-coordinate. 

We now transform this into a  grounded square-$\el$ representation. The idea is to process  the intervals in the order of their endpoints, and sometimes  shifting the endpoints by a certain offset $\gamma$ to avoid unnecessary crossings.  We now give \rev{a} formal description of the steps of the construction by $S_1$--$S_3$.\smallskip


\textbf{Step $S_1$.} Initialize an empty list $Q$, and then process the intervals in the increasing order of the $x$-coordinates of their left endpoints. \rev{At the end of the processing, $Q$ will contain a set of tuples that correspond to a set of disjoint intervals. Each tuple is of the form $(X,\gamma)$, where $X$ is an interval and $\gamma$ is an offset to be used later to create intermediate space between pairs of intervals. We now describe how to process an interval.} 
    
While processing an interval $I = [I_\ell,I_r]$, we first find the closest non-intersecting interval $J = [J_\ell,J_r]$ to the left of $I$. If no such interval exists, then we continue processing the next interval. Otherwise, let $(X,\gamma)$ be the tuple at the end of the list $Q$ (assume a dummy tuple $(\Phi, 0)$ if the list is empty). If $J\not =X$, then append a new tuple $(J, J_\ell+\gamma$) to $Q$. 

\textbf{Step $S_2$.} For each pair of consecutive tuples $(A,\alpha)$ and $(B,\beta)$ in $Q$, update the $x$-coordinates of the endpoints originally lying in $[A_r+1,\rev{B_r}]$ by adding the integer $\alpha$. \rev{Later, we will use these updated $x$-coordinates to construct the $\el$-shapes. A spacing of $\alpha$ would ensure that the $\el$-shapes corresponding to $A$ and $B$ will remain disjoint.} Finally, for the last tuple $(X,\gamma)$,  update the $x$-coordinates of the endpoints originally lying in $[X_r+1,+\infty]$, by adding the integer $\gamma$. Figure~\ref{fig:overlaptoL}(a--b) illustrate  this step. 

\textbf{Step $S_3$.} For each interval $[I_\ell, I_r]$ in the increasing order of their left endpoints, construct a square-$\el$ shape with endpoints $(\frac{I_\ell}{2},-\frac{I_\ell}{2})$ and  $(I_r+\frac{I_\ell}{2}, -\frac{I_\ell}{2})$, and create the bend point at $(I_\ell+\frac{I_r-I_\ell}{2}, \frac{I_r-I_\ell}{2})$. See Figure~\ref{fig:overlaptoL}(c).\smallskip


By $S_3$, it is straightforward to see that all the shapes are grounded on the line $x+y=0$. Let $\Gamma$ be the resulting grounded  square-$\el$ representation. The following lemma claims the correctness of the representation. 

\begin{figure}[t]
\centering
\includegraphics[width=\textwidth]{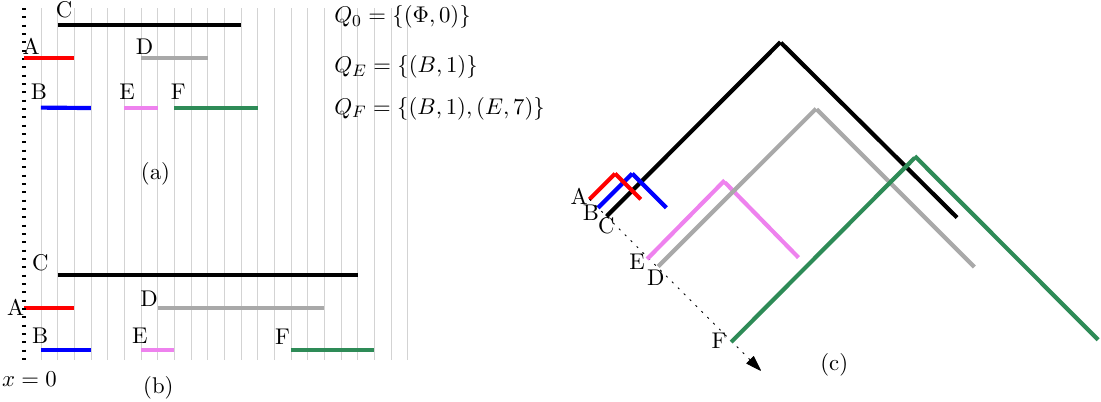}
\caption{(a) An overlap representation. $Q_0$ is the initial empty list. $Q_E$ and $Q_F$ are the lists after processing the intervals $E$ and $F$, respectively. After step $S_1$, the list $Q$ is $\{(B,1),(E,\rev{7})\}$.     (b)  Modification after step $S_2$. All the endpoints in $[B_r+1,E_l]$ have been shifted 1 unit, and all the endpoints in $[E_r+1,+\infty]$ have been shifted \rev{7} units. (c) The grounded square-$\el$ representation constructed at $S_3$; $A$ is grounded at $(0,0)$.}
\label{fig:overlaptoL}
\end{figure}

\begin{lemma}
\label{lem:helper}
The graph represented by $\Gamma$ is the same as the graph represented by the overlap representation.
\end{lemma}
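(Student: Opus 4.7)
The plan is to show that, for every pair of intervals $I=[I_\ell,I_r]$ and $J=[J_\ell,J_r]$, the corresponding shapes $L(I)$ and $L(J)$ in $\Gamma$ intersect if and only if $I$ and $J$ properly overlap. I would split the analysis into the three mutually exclusive relations: proper overlap, nesting, and disjointness.

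To streamline the geometry, I would first pass to a rotated coordinate frame in which the grounding line $x+y=0$ becomes horizontal. In this frame each shape $L([I_\ell,I_r])$ becomes a standard axis-aligned type-$\ul$ square-$\el$ whose vertical arm runs from $(I_\ell,0)$ to $(I_\ell,I_r)$ and whose horizontal arm runs from $(I_\ell,I_r)$ to $(I_\ell+I_r,I_r)$, where the coordinates implicitly incorporate the shifts applied in $S_1$--$S_2$. A short case analysis on the four possible segment--segment interactions then yields the intersection criterion: for $I_\ell<J_\ell$, the two L-shapes intersect if and only if $J_\ell\le I_\ell+I_r$ and $I_r\le J_r$.

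Using this criterion, the proper-overlap case $I_\ell<J_\ell<I_r<J_r$ and the nesting case $I_\ell<J_\ell<J_r<I_r$ are both immediate: the former satisfies both inequalities (since $J_\ell<I_r\le I_\ell+I_r$ and $I_r<J_r$), while the latter violates $I_r\le J_r$, so $L(J)$ lies strictly beneath the horizontal arm of $L(I)$. The delicate case is disjointness ($I_r<J_\ell$): without the shifts from $S_1$--$S_2$ one could still have $J_\ell\le I_\ell+I_r$, yielding a spurious intersection, so the burden of proof lies in showing that the construction's shifts eliminate exactly this possibility.

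This disjoint case is where I expect the main obstacle to lie. I would argue it by induction on the order in which intervals are appended to $Q$ in $S_1$, showing that each tuple $(J,J_\ell+\gamma)$ records precisely the cumulative offset that $S_2$ imparts to every endpoint originally lying past $J_r$. Consequently, if $I$ and $J$ are disjoint with $I_r<J_\ell$, they are separated by at least one tuple $(X,\alpha)$ of $Q$, and the shift $\alpha$ contributed by $S_2$ is large enough to guarantee the strict inequality $J_\ell^{\mathrm{new}}>I_\ell^{\mathrm{new}}+I_r^{\mathrm{new}}$ in the updated coordinates. The complementary observation one must verify is that a properly overlapping pair is \emph{never} separated by a tuple of $Q$---because $S_1$ only appends the closest \emph{non-intersecting} left neighbor---so such a pair receives the same shift and continues to satisfy the intersection criterion. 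Together, these two facts close the biconditional and finish the proof.
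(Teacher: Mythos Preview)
Your overall strategy---rotate to make the grounding line horizontal, extract the clean criterion ``$L(I)$ meets $L(J)$ iff $J_\ell\le I_\ell+I_r$ and $I_r\le J_r$'', then verify it relation by relation---is sound and in some ways tidier than the paper's argument, which instead maintains an incremental invariant as intervals are inserted one by one and leans on a figure for the sub-case analysis in its Case~1.

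However, your handling of the properly-overlapping case contains a genuine error. The claim that ``a properly overlapping pair is never separated by a tuple of $Q$ \ldots\ so such a pair receives the same shift'' is false. Take $A=[0,10]$, $B=[2,6]$, $D=[8,14]$: when $D$ is processed in $S_1$, its closest non-intersecting left neighbour is $B$, the tuple $(B,2)$ is appended, and $S_2$ shifts every endpoint past $B_r=6$. The overlapping pair $A,D$ then receives \emph{different} shifts ($A_\ell$ stays put while $A_r,D_\ell,D_r$ move). That $S_1$ only appends non-intersecting neighbours does not prevent a tuple's right endpoint from landing inside an overlapping pair. The paper confronts exactly this: its Case~1 enumerates the three ways a shift boundary can fall among the four endpoints of an overlapping pair and checks, via a figure, that the $\el$-shapes still cross afterward. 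In your framework the repair is actually simpler than the argument you proposed: since $S_2$ applies a non-decreasing offset to the sorted endpoint sequence, the order $I_\ell<J_\ell<I_r<J_r$ is preserved, and because all shifted coordinates remain non-negative the inequality $J_\ell'\le I_\ell'+I_r'$ follows from $J_\ell'<I_r'$ alone. So your conclusion survives, but the justification you gave does not.

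For the disjoint case your plan mirrors the paper's Case~2: both use that the closest non-intersecting left neighbour $X$ of $J$ satisfies $I_r\le X_r$, and that the cumulative offset recorded for $X$ pushes $J_\ell$ far enough right. Be aware, though, that $I$'s own endpoints may also have been shifted (by smaller amounts), so the inequality you must establish is $J_\ell^{\mathrm{new}}>I_\ell^{\mathrm{new}}+I_r^{\mathrm{new}}$, not merely that $J_\ell$ moves past the \emph{original} $I_\ell+I_r$; your inductive bookkeeping on $Q$ has to track both sides.
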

\begin{proof}
Let $G$ be the graph corresponding to the input overlap representation. \rev{Let $H_k$ be the  subgraph of $G$ induced by the first $k$ intervals processed in $S_3$. Let $\Gamma_k$ be the drawing obtained after processing these intervals. It now suffices to prove that $\Gamma_k$ is a grounded square-$\el$ representation of $H_k$.}


\rev{In the base case when $k=1$ is trivial, i.e., the graph $H_1$ consists of a single interval and in $S_3$, we construct a single square-$\el$}. 
\rev{We now} assume that the claim holds for $H_1,\ldots,H_{k-1}$, where $k>1$. \rev{Let $B$ be the $k$th interval and $A$ be an interval processed before $B$. Let $a$ and $b$ be the corresponding vertices in $H_k$,  and 
let } $A',B'$ be the modified intervals (computed in $S_2$). For any interval $I$, let  $L(I)$ be the square-$\el$ shape constructed as in $S_3$. We now consider the following cases.

\textbf{Case 1 ($a$ and $b$ are adjacent in $H_k$):} In this case $A$ and $B$ properly intersect; i.e., neither contains the other. Let $J$ be the closest non-overlapping interval to the left of $B$. \rev{In $S_2$,} any interval having an endpoint between $[J_r+1,B_\ell]$ will be shifted together with $B$. Thus the  intersections of these \emph{modified}  intervals  with \rev{$B'$} remain valid. All the other intervals who were intersecting $B$ but did not have an endpoint in $[J_r+1,B_\ell]$ will also have their right endpoint shifted \rev{in $S_2$}. Hence $A'$ must properly intersect $B'$. We now show that the shift in $S_2$ keeps the ordering of the intervals in $H_{k-1}$ intact.

Consider a pair of vertices in $p,q$ in $H_{k-1}$, and let $P$ and $Q$ be their corresponding intervals. Let $P'$ and $Q'$ be the modified intervals in $S_2$. 

If $p$ and $q$ are adjacent, then $P$ and $Q$ must properly intersect. The endpoints of $P$ and $Q$ can only get extended to the right, and if so, then they must shift by the same amount. Hence, $P'$ and $Q'$ must maintain the endpoint orders after the shift, and $L(P')$ and $L(Q')$ must intersect. 

Now consider the case when $p$ and $q$ are not adjacent. If  one of $P$ and $Q$ contains the other, then the same argument holds. If  neither contains the other, then the offset may only increase their distance. Therefore, if $L(P)$ and $L(Q)$ do not intersect, then  $L(P')$ and $L(Q')$ cannot intersect.

\textbf{Case 2 ($a$ and $b$ are non-adjacent in $H_k$):} In this case either $A$ and $B$ do not intersect, or $A$ contains $B$ (note that $B$ cannot contain $A$).  

\rev{First assume that} $A$ contains $B$. Since the shift in $S_2$ does not change the relative order of the interval endpoints, $A'$ must contain $B'$. \rev{From the coordinate computations of the $\el$-shapes in $S_3$, it is straightforward to observe that $L(A')$ and $L(B')$ do not intersect.} 


Assume now that $A$ and $B$ do not intersect. Recall  that $B$ has been processed after $A$. While we processed $B$  in $S_1$, we first computed the closest interval $J$ to the left of $B$. Hence $A_r\le J_r$. In $S_2$, we ensured that the endpoints of $B$ are shifted to the right by at least an amount of $J_\ell+\gamma$. Here, $\gamma$ corresponds to the overall shift   to accommodate the segments that were processed before $J$, and the term $J_\ell$ is to avoid the crossing between $L(J')$ and $L(B')$. Since $A_r\le J_r$, $L(A')$ and $L(B')$ cannot intersect. Using the argument of Case 1, observe that such shifting still maintains a valid representation for $H_{k-1}$.~$\Box$
\end{proof}

\begin{theorem}
Given a circle graph representation with $n$ chords, in $O(n\log n)$ time one can transform it into a grounded square-$\el$ representation. Thus, the $\mis$ problem on grounded square-$\el$ representations is at least as hard as the $\mis$ problem on circle graph representations. 
\end{theorem}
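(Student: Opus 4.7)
The plan is to combine the overlap-graph reduction from Section~\ref{sec:gsr} with the explicit construction in steps $S_1$--$S_3$; correctness of the resulting grounded square-$\el$ representation $\Gamma$ is already handled by Lemma~\ref{lem:helper}, so the only remaining tasks are to verify the $O(n\log n)$ running time and to spell out the hardness reduction.

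For the running time, I would first spend $O(n\log n)$ time sorting the $2n$ chord endpoints to obtain the overlap representation on integer coordinates in $\{0,1,\ldots,2n\}$, exactly as in Section~\ref{sec:gsr}. To execute $S_1$, I would process the intervals in increasing order of left endpoint while maintaining a balanced search tree of already-processed intervals keyed by right endpoint; this locates the closest non-intersecting interval $J$ to the left of the current $I$ in $O(\log n)$ time per query, for a total of $O(n\log n)$. Step $S_2$ I would perform as a single merged sweep: traverse $Q$ left-to-right and, for each consecutive pair $(A,\alpha),(B,\beta)$ and for the final tuple $(X,\gamma)$, walk through exactly those endpoints that originally lie in the corresponding range and add the offset $\alpha$ (resp.\ $\gamma$). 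Since the ranges partition the sorted endpoint list, each endpoint is touched at most once and $S_2$ costs $O(n)$. Step $S_3$ emits one square-$\el$ per interval in $O(n)$ time, giving an overall complexity of $O(n\log n)$.

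For the hardness consequence, I would observe that the construction defines an explicit bijection between chords of the input circle graph and square-$\el$ shapes in $\Gamma$, and by Lemma~\ref{lem:helper} this bijection preserves adjacencies. Hence it maps independent sets, and in particular maximum independent sets, one-to-one. Therefore any algorithm solving $\mis$ on grounded square-$\el$ representations in time $T(n)$ yields an algorithm for $\mis$ on circle graph representations in time $T(n)+O(n\log n)$, establishing the claimed reduction.

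The main obstacle I anticipate is implementing $S_2$ without incurring $\Theta(n^2)$ work: a literal reading of the description could suggest re-scanning the full endpoint list once per tuple in $Q$, which would blow up the running time. The merged sweep above is the natural remedy, and as a secondary observation, since each offset appended to $Q$ is at most $2n$ and at most $n$ offsets are appended, the final coordinates remain polynomially bounded and fit in $O(\log n)$ bits each, so no precision difficulty arises.
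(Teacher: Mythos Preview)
Your proposal is correct and mirrors the paper's proof: correctness is delegated to Lemma~\ref{lem:helper}, $S_1$ is implemented via $O(\log n)$-time predecessor queries (the paper uses binary search on sorted endpoint arrays rather than your balanced BST, an immaterial difference), and $S_2$--$S_3$ are handled as linear scans. One minor slip: the ranges in $S_2$ are disjoint but do not literally \emph{partition} the endpoint list (endpoints lying in some $[B_\ell+1,B_r]$ receive no offset); disjointness alone already yields the $O(n)$ bound you claim, so the conclusion stands.
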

\begin{proof}
By Lemma~\ref{lem:helper}, one can construct the required grounded square-$\el$ representation by following $S_1$--$S_3$. We compute two sorted arrays, one for the left endpoints and the other for the right endpoints of the intervals in the overlap representation. The sorting takes $O(n\log n)$ time.  We use these arrays to \rev{find the closest non-intersecting interval} in step  $S_1$ in $O(\log n)$ time by performing a binary search. We need only $O(n)$ \rev{such searches}, and hence $O(n\log n)$ time in total. Steps $S_2$--$S_3$ \rev{iterate over all the intervals, and hence} take $O(n)$ time. Therefore, the running time of the  overall transformation is bounded by $O(n\log n)$.~$\Box$
\end{proof}

Our reduction shows that every circle graph is a grounded square-$\el$ graph. However, the reverse is not true. Even, there are grounded $\el$ graphs that are not grounded square-$\el$ graphs.
\begin{theorem}
\label{thm:notAllAre}
There are grounded square-$\el$ graphs that are not circle graphs. Moreover, there are grounded $\el$ graphs that are not grounded square-$\el$ graphs.
\end{theorem}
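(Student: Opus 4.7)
The plan is to prove the two statements by exhibiting explicit witness graphs and then checking membership in one class while ruling out membership in the other.

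For the first statement, I would take $H^+$ to be the graph depicted in Figure~\ref{fig:notCircle}(b), together with the grounded square-$\el$ representation shown in Figure~\ref{fig:notCircle}(c); the first task is to read off the adjacencies from the picture and confirm that the representation realises $H^+$, which is a straightforward verification. The substantive task is to show that $H^+$ is not a circle graph. The natural route is to appeal to the overlap-graph characterisation used in Section~\ref{sec:gsr}: if $H^+$ were a circle graph, it would admit an overlap representation by intervals. I would then fix a hypothetical overlap representation, pick a small induced subgraph of $H^+$ whose interval structure is forced (for instance, an induced cycle or a vertex of high degree whose neighbourhood is highly constrained), and derive a contradiction by showing that some remaining vertex of $H^+$ cannot be accommodated as an interval consistent with the required pattern of proper intersections and containments. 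The choice of $H$ and of the augmentation $H\mapsto H^+$ is presumably tailored so that the few non-neighbour pairs force incompatible containments in any overlap representation; the main obstacle is therefore bookkeeping the case analysis cleanly rather than any deep ingredient.

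For the second statement, I would again exhibit a concrete graph $G$ that admits a grounded $\el$ representation but no grounded square-$\el$ representation. The existence of a grounded $\el$ representation will be given by a direct construction (analogous to Figure~\ref{fig:notCircle}(c) but using $\el$-shapes of unrestricted aspect ratio). For the non-existence of a grounded square-$\el$ representation I plan to exploit the extra rigidity of square-$\el$'s: if the vertical and horizontal arms of every $\el$ have equal length, then the position of the bend point of an $\el$ grounded at $(a,0)$ lies on the line $y=x-a$, and the reach of its horizontal arm is exactly equal to the height of its bend. Consequently, if $\el_i$ lies entirely to the left of $\el_j$'s grounding point, then $\el_i$ and $\el_j$ can intersect only in a pattern dictated by a single real parameter (the bend height). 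I would choose $G$ so that some triple or quadruple of vertices imposes simultaneous lower and upper bounds on these bend heights that are satisfiable for general $\el$-shapes but provably inconsistent for squares, with the impossibility established by a short inequality argument on the grounding coordinates.

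The hardest part of the proposal is the second one: ruling out every grounded square-$\el$ representation of a fixed graph $G$ requires a clean combinatorial invariant, because one must quantify over all orderings of groundings and all length assignments simultaneously. I would handle this by first showing that the left-to-right ordering of grounding points is essentially forced by the adjacencies (using that a square-$\el$ grounded further right can only reach a square-$\el$ grounded further left through its horizontal arm, and only if its bend is high enough), and then using that forced ordering to turn the representation question into a one-dimensional feasibility problem on the bend heights, where the chosen $G$ produces a contradictory cycle of inequalities.
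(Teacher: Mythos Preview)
Your plan is in the right spirit but differs from the paper in one place and is genuinely incomplete in another.

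\medskip
\noindent\textbf{First part.} You use the same witness $H^+$ as the paper, but you propose to rule out a circle representation by a direct case analysis on overlap representations. The paper instead invokes a clean structural shortcut: a graph $G$ is a permutation graph if and only if $G^+$ is a circle graph (a classical result), and then cites Limouzy's theorem that $H$ is not a permutation graph. This turns the ``not a circle graph'' part into a one-line citation rather than a bespoke case analysis. Your route could be made to work, but it is considerably more laborious and error-prone; the permutation-graph detour is the idea you are missing.

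\medskip
\noindent\textbf{Second part.} Here there is a real gap: you never name a witness graph. The entire difficulty of this half of the theorem lies in choosing a small graph whose square-$\el$ non-representability can actually be certified; the generic strategy you outline (force the grounding order, then chase inequalities on bend heights) is only a strategy until you produce such a graph and carry it out. The paper uses the wheel $W_5$: it first argues that, up to symmetry, the outer $5$-cycle has an essentially unique square-$\el$ drawing (by fixing the tallest $\el$ and propagating which neighbours must hit it from the left), and then shows by a short geometric case split that the hub vertex cannot be inserted. Note also a slip in your geometry: for a square-$\el$ grounded at $(a,0)$ with arm length $h$, the bend is at $(a,h)$, which does \emph{not} lie on $y=x-a$; it is the far endpoint $(a+h,h)$ of the horizontal arm that lies on that line. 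This does not wreck your intended inequality argument, but it signals that the details still need to be worked out against a concrete example such as $W_5$.
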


\begin{figure}[t]
\centering
\includegraphics[width=\textwidth]{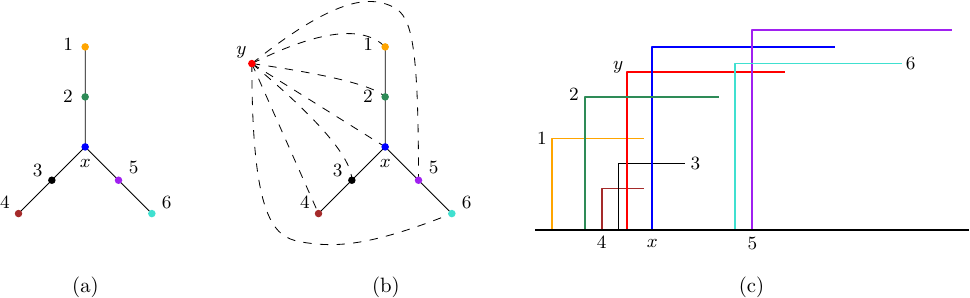}
\caption{(a) A graph $H$, (b) the graph $H^+$ obtained from $H$, and (c) a grounded square-$\el$ representation of $H^+$.} 
\label{fig:notCircle}
\end{figure}

\begin{proof}
We first show that not all grounded square-$\el$ graphs are circle graphs. For a graph $G$, let $G^+$ denote the graph obtained from $G$ by adding a new vertex $y$ to the graph and connecting it to every vertex of $G$; it is known that $G$ is a permutation graph if and only if $G^+$ is a circle graph~\cite{GolumbicBook}. Now, consider the graph $H$ shown in Figure~\ref{fig:notCircle}(a). Limouzy~\cite{DBLP:conf/isaac/Limouzy10} proved that $H$ is not a permutation graph. Consequently, the graph $H^+$ (shown in Figure~\ref{fig:notCircle}(b)) is not a circle graph. However, a grounded square-$\el$ representation of $H^+$ is shown in Figure~\ref{fig:notCircle}(c).

\begin{figure}[h]
\centering
\includegraphics[width=.8\textwidth]{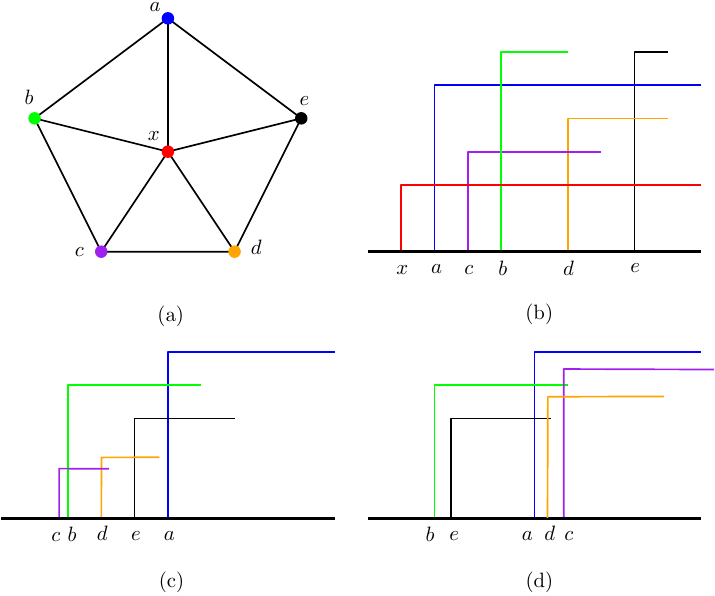}
\caption{An illustration for the proof of Theorem~\ref{thm:notAllAre}.} 
\label{fig:w5}
\end{figure}

We now show that there are grounded $\el$ graphs that are not grounded square-$\el$ graph. To this end, we show that $W_5$ (i.e., the wheel graph of order 5 as shown in Figure~\ref{fig:w5}(a)) is a grounded $\el$ graph, but not a grounded square-$\el$ graph. 

A grounded $\el$ graph representation of $W_5$ is shown in Figure~\ref{fig:w5}(b). Suppose now for a contradiction that $W_5$ shown in Figure~\ref{fig:w5}(a) has a grounded square-$\el$ graph representation. \rev{The idea is to first explore possible square-$\el$ representations of the 5-cycle on the outerface of $W_5$, and then show that one cannot add  the $\el$-shape corresponding to the central vertex $x$ in these representations.}

Consider the set of $\el$-shapes induced by the 5-cycle on the outerface and assume w.l.o.g. that $a$ is the highest $\el$-shape. Since $a$ is the highest $\el$-shape, both of its adjacent $\el$-shapes $b$ and $e$ must intersect $a$ from the left in such a way that $b$ and $e$ do not intersect each other. Assuming w.l.o.g. that $b$ is to the left of $e$.  
 \rev{We now distinguish two cases considering whether $c$ intersects the horizontal or the vertical segment of $b$.} 

\rev{\textbf{Case 1 ($c$ intersects the vertical  segment of $b$):} 
 In this case,}
 $c$ must be to the left of $b$, as shown in Figure~\ref{fig:w5}(c). Then  one can check that the only possibility for the $\el$-shape $d$ is to be between $b$ and $e$. \rev{This gives us a square-$\el$ representation of the outerface of $W_5$. We now show that the $\el$-shape $x$ cannot be added in this representation to intersect all other $\el$-shapes.} 
 
\rev{If $x$ is to the left of $d$, then to intersect $d$, $x$ must have a height smaller than that of $d$. Consequently, $x$ cannot intersect $a$. Therefore, $x$ must be to the right of $d$. If the height of $x$ is smaller than that of $e$, then $x$ cannot intersect $b$. If the height of $x$ is larger than that of $e$, then $x$ cannot intersect both $c$ and $e$ at the same time. Hence we cannot add $x$ to complete a square-$\el$ representation of $W_5$.}

\rev{\textbf{Case 2 ($c$ intersects the horizontal  segment of $b$):}
Note that $a$ is the highest $\el$-shape and  $c$ must not intersect $a$. Hence $c$ must be to the right of $a$, as shown in Figure~\ref{fig:w5}(d). Since the $\el$-shape $d$  does not intersect $a$ but intersects both $e$ and $c$,  it must be positioned between $a$ and $c$.  This gives us a square-$\el$ representation of the outerface of $W_5$. We now show that the $\el$-shape $x$ cannot be added in this representation to intersect all other $\el$-shapes. 
}

\rev{
If $x$ is to the left of $e$, then to intersect $e$, $x$ must have a height smaller than that of $e$. Consequently, $x$ cannot intersect $c$. Therefore, $x$ must be to the right of $e$ and intersect the horizontal segment of both $b$ and $e$. In this scenario, if $x$ lies to the left of $d$, then  $x$ cannot intersect $d$. If $x$ lies to the right of $d$ then  $x$ cannot intersect both $a$ and $c$ at the same time. Hence we cannot add $x$ to complete a square-$\el$ representation of $W_5$.}~$\Box$
\end{proof}

\section{Representations with Bounded-Length Integral Shapes}
\label{sec:dp}
In this section, we consider string representations where the strings are $y$-monotone \rev{(not necessarily strictly monotone)} polygonal paths, the length of each string is bounded by a constant $\kappa$, and all the bends and endpoints are on integral coordinates. We show that the $\mis$ problem on such representations can be \rev{computed} in $O(n^2)$ time, and this is best possible  under the Strong Exponential Time Hypothesis.

\subsection{An $O(n^2)$-time Algorithm}
Here we describe the $O(n^2)$-time algorithm. For simplicity, we first  examine the case when each string is an $\el$-shape of type $\ul$. Denote by $M_p$ an axis-aligned simple $y$-monotone (not necessarily strictly monotone)  polygonal path that satisfies the following three  constraints: 
\begin{enumerate}
    \item [(a)] $M_p$ starts at point $p$, and ends at a point on the line $y=\kappa$.
    \item [(b)] $M_p$ contains at most $2\kappa$ bends, and
    \item [(c)] the length of each line segment in $M_p$ is bounded by $\kappa$.
\end{enumerate}
   Then the number of such distinct strings can be at most $f(\kappa) \in O(1)$ (since $\kappa$ is a constant). Denote the set of such strings by $\mathcal{M}_p$.

We employ a dynamic programming technique, where we express a subproblem with two points $a,b$ on the grounding line and two monotone paths $M_a$ and $M_b$. Figure~\ref{fig:fixedheight}(a) illustrates a subproblem $\mis(a,b,M_a,M_b)$. The subproblem contains all the $\el$-shapes of the given  representation that are in the region between $M_a$ and $M_b$. The left side of the region is open and the right side is closed, hence the $\el$-shape that starts at $a$ must be excluded. While constructing subproblems, we  will ensure that $a$ and $b$ belong to the set of grounding points on the grounding line. The initial problem can be expressed as $\mis(i,j,M_i,M_j)$, where $i$ is a grounding point of a dummy $\el$-shape $I$ lying to the left of all the $\el$-shapes, and  $j$ is the grounding point of the rightmost $\el$-shape. $M_i$  and $M_j$ are two strings that bound all the $\el$-shapes in between.

Given a problem of the form  $\mis(a,b,M_a,M_b)$, we first find a  grounding point $q$ at the median position among the distinct grounding points between $a$ and $b$, as illustrated in Figure~\ref{fig:fixedheight}(b). Note that  $\el$-shapes can share grounding points, and we only consider the distinct points while considering the median point. If $q$ coincides with $b$, then we have the base case where all the $\el$-shapes start at $b$. We thus return 1 or 0 depending on whether there exists a $\el$-shape in the region  between $M_a$ and $M_b$ (this takes $O(n)$  time). Otherwise, we compute the solution using the following recurrence relation. 
\[
\mis(a,b,M_a,M_b)=\max_{M\in \mathcal{M}_q} \mis(a,q,M_a,M) + \mis(q,b,M,M_b).
\]

\begin{figure}[pt]
\centering
\includegraphics[width=\textwidth]{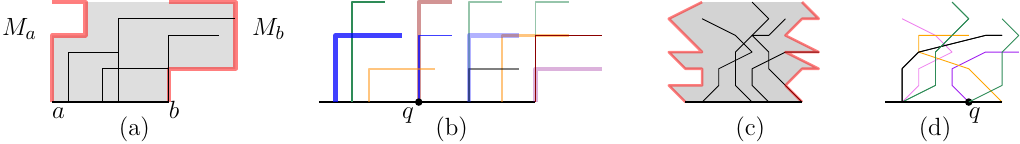}
\caption{Illustration for the dynamic programming. (a) A subproblem. (b) Splitting into subproblems. (c)--(d) General $y$-monotone strings.}
\label{fig:fixedheight}
\end{figure}

To verify the correctness of the recurrence relation, observe that any independent set of $\mis(a,b,M_a,M_b)$ can be partitioned by a string in $M_q$. The size of the dynamic programming table is bounded by $O(n^2)\times O(1)$, where the first  term comes from the choices for $a$ and $b$, and the  $O(1)$ term corresponds to the possible choices for $M_a$ and $M_b$. Computing a base case requires $O(n)$ time.  In the base case, $a$ and $b$ are consecutive on the ground line, and hence there can be at most $O(n)\times f(\kappa)\times f(\kappa)$ distinct base cases, requiring $O(n^2)$ time in total. Computing an entry in the general case requires $f(\kappa)\in O(1)$ time (using constant time table look-up). Hence the running time for the general case is also bounded by $O(n^2)$ in total.

Although we described the algorithm for $\el$-shapes,  it is straightforward to generalize the algorithm for $y$-monotone strings, as illustrated in Figure~\ref{fig:fixedheight}(c)--(d). The only difference is that we need to define $M_p$ as a simple $y$-monotone path. The following theorem summarizes the results of this section.
\begin{theorem}
Let $R$ be a string representation such that the strings are $y$-monotone (not necessarily strict), the length of each string is bounded by a constant, and all the bends and endpoints are on integral coordinates. Then  the $\mis$ problem in $R$ can be solved in $O(n^2)$ time.
\end{theorem}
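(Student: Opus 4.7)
The plan is to follow the dynamic programming outline sketched just before the theorem statement, and to verify carefully the three non-obvious points: (i) that the set $\mathcal{M}_p$ of admissible bounding paths at any grounding point $p$ has constant size, (ii) that the recurrence at a median grounding point $q$ partitions every feasible independent set correctly, and (iii) that the total running time is $O(n^2)$.

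First, I would formally define a \emph{canonical bounding path} $M_p$ at a grounding point $p$ as a $y$-monotone simple axis-aligned polygonal path starting at $p$, reaching height $\kappa$, with at most $2\kappa$ bends and each segment of length at most $\kappa$. Since all bends lie on integer coordinates inside a bounding box of constant size depending only on $\kappa$, the number of such paths is at most some $f(\kappa)=O(1)$. I would then define the subproblem $\mis(a,b,M_a,M_b)$ as the maximum independent set among strings of $R$ that lie strictly to the right of $M_a$, weakly to the left of $M_b$, and do not contain the starting endpoint $a$. The initial call is $\mis(i,j,M_i,M_j)$ with $i$ a dummy grounding point to the left of all strings and $j$ the rightmost grounding point, and $M_i,M_j$ chosen so that together they enclose all of $R$.

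Next I would establish the recurrence
\[
\mis(a,b,M_a,M_b)\;=\;\max_{M\in\mathcal{M}_q}\bigl(\mis(a,q,M_a,M)+\mis(q,b,M,M_b)\bigr),
\]
where $q$ is the median distinct grounding point strictly between $a$ and $b$. For correctness I would argue both directions: any union of independent sets of the two subproblems is independent in the combined region because $M$ separates the two pieces geometrically (no string can cross $M$ and still avoid intersecting a string grounded on the other side while remaining inside the region). Conversely, given an optimal independent set $S$ for $\mis(a,b,M_a,M_b)$, I would show there exists some $M\in\mathcal{M}_q$ separating $S$ into its left and right parts. This is the main geometric step: since each string has length at most $\kappa$, any string crossing the vertical line $x=q$ does so within a bounded horizontal window around $q$, and one can trace a $y$-monotone path starting at $q$ that hugs the upper envelope of the left-side strings and the lower envelope of the right-side strings (within the $\kappa$-neighborhood of $q$); this path uses only $O(\kappa)$ bends and integer coordinates, so it belongs to $\mathcal{M}_q$. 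I expect this separator-existence argument to be the main obstacle, since it must handle the $y$-monotone (not strictly monotone) case and the fact that strings can share grounding points.

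Finally, for running time I would observe that the table has $O(n^2)\cdot f(\kappa)^2=O(n^2)$ entries. Base cases, where $a$ and $b$ are consecutive distinct grounding points, can be resolved in $O(n)$ time each by scanning the strings grounded at $b$ and testing containment in the region bounded by $M_a$ and $M_b$; since there are $O(n)\cdot f(\kappa)^2=O(n)$ such base cases, they contribute $O(n^2)$ in total. Each non-base entry takes $f(\kappa)=O(1)$ time by constant-sized maximization over $M\in\mathcal{M}_q$ using table lookup, contributing another $O(n^2)$. For the general $y$-monotone strings of Figure~\ref{fig:fixedheight}(c)--(d), I would note that the only change is to replace the specific $\ul$-shape analysis with the observation that any $y$-monotone string of bounded length still fits in a constant-size bounding box and so the set $\mathcal{M}_p$ remains of constant size, and the separator argument in the recurrence goes through verbatim. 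This gives the claimed $O(n^2)$ bound.
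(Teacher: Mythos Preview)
Your proposal is correct and follows essentially the same approach as the paper: the same subproblem definition $\mis(a,b,M_a,M_b)$, the same median-split recurrence over $\mathcal{M}_q$, and the same $O(n^2)$ accounting (an $O(n^2)\cdot f(\kappa)^2$ table, $O(1)$ per non-base entry, $O(n)$ per base entry with $O(n)$ base cases). In fact your treatment of the separator-existence step---tracing a $y$-monotone path at $q$ through the bounded window using envelopes of the left and right strings---is more explicit than the paper's, which simply asserts that any independent set can be partitioned by some $M\in\mathcal{M}_q$.
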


\subsection{Lower Bound}
The Strong Exponential Time Hypothesis (SETH), introduced by Impagliazzo et al.~\cite{DBLP:journals/jcss/ImpagliazzoPZ01}, has been used to analyze fine-grained time-complexity of problems that lie in P. Under SETH, CNF-SAT on $n$ variables cannot be solved in  $O(2^{n(1-\epsilon )}poly(n))$ time\rev{, where} $\epsilon >0$. The following theorem sates that under SETH, finding $\mis$ in outerstring graphs requires $\Omega(n^{2-\epsilon})$ time.
\begin{theorem}
\label{thm:seth}
Assuming SETH, computing an $\mis$ in an outerstring representation with $n$ strings requires $\Omega(n^{2-\epsilon})$ time, even when each string contains $O(1)$ bends.
\end{theorem}
\begin{proof}
Given an instance \rev{$I$} of CNF-SAT \rev{with $n$ variables and $m$ clauses, we will construct an outerstring representation with $O(2^{n/2} poly(n,m))$ strings such that $I$ admits an affirmative answer if and only if the constructed outerstring representation contains an independent set of size $m$. The construction of the outerstring representation would take $O(2^{n/2} poly(n,m))$ time. Note that the size of the outerstring representation is $N = O(2^{n/2} poly(n,m))$. Assume now that  there exists an $N^{2-\epsilon}$ algorithm, where $\epsilon>0$, for the MIS problem in an outerstring representation. Then we can find a satisfying assignment for $I$ in $N^{2-\epsilon} = O(2^{(2-\epsilon)n/2} poly(n,m)) = O(2^{(1-\epsilon')n} poly(n,m))$ time, where $\epsilon' = \epsilon/2$. This is a contradiction to our SETH assumption. }

\textit{Construction of the outerstring representation:} \rev{The idea of the construction is as follows. We first  partition the $n$ variables of $I$ into two sets $A$ and $B$. Note that the variables in $A$ can have $2^{n/2}$ different truth assignments. We then   construct the outerstrings that correspond two these truth assignments on the left halfplane of the $x=0$ line. Finally, we  construct the outerstrings for the set $B$ on the right halfplane of $x=0$ line symmetrically. We now describe the details.} 

 



\begin{figure}[pt]
\centering
\includegraphics[width=\textwidth]{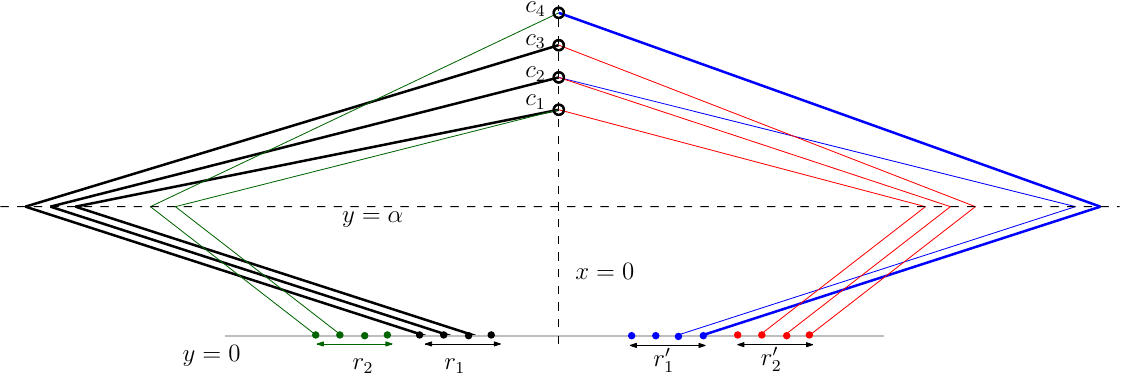}
\caption{Illustration for the proof of Theorem~\ref{thm:seth}\rev{, where $m=4$. The truth assignment $a_1$ satisfies the clauses $c_1,c_2$ and $c_3$. An independent set of size 4 is shown in bold. The corresponding truth assignments $a_1$ and $b_1$ satisfy all the clauses.} } 
\label{fig:seth}
\end{figure}

 Let $c_1,\ldots, c_m$ be the $m$ clauses  \rev{in $I$}, and denote by the `clause-point' $p_i$ the point $(0,2\alpha+i)$, where $\alpha$ is a positive constant.   Let $a_i$, where $1\le i\le 2^{n/2}$, be  the truth value assignments for the variables in  $A$. For $a_1,a_2,\ldots$, assign  intervals $r_1,r_2,\ldots$, each of length $m$,  consecutively on the grounding line, as illustrated in Figure~\ref{fig:seth}. If the assignment corresponding to $a_i$ satisfies a set $S$ of $\beta$ clauses, then we will create $\beta$ outerstrings that start at $r_i$, and each connects to a distinct clause in $I$. It is straightforward to ensure that the strings lie on the left half-plane of $x=0$, and  do not intersect themselves. For each string, we create a bend on the line $y=\alpha$ so that for any $1\le i,j\le 2^{n/2}$, where $i\not =j$, the strings that originate   from $r_i$ intersect those that originate from $r_j$. We construct the strings for the truth assignments $b_1,\ldots,b_{2^{n/2}}$ for the variables of $B$ symmetrically. Let $\mathcal{R}$ be the resulting representation.

\textit{From an independent set to a satisfying assignment:} We now show that an $\mis$ of size $m$ on $\mathcal{R}$   corresponds to an affirmative solution to $I$. Without loss of generality assume that the $\mis$ contains a string $Q$ that starts at some $r_i$. Assume that  $a_i$ satisfies $\beta$ clauses of $I$. 
We take $a_i$ as the  assignment for the variables in $A$.  If $\beta =m$, then we can choose any assignment for the variables in $B$. \rev{Otherwise, $\beta< m$. Since the size of $\mis$ is $m$, there must be $(m-\beta)$ strings on the right-halfplane in the solution that correspond to the remaining set of clauses of $I$. Furthermore, all these $(m-\beta)$ strings correspond to exactly one truth assignment for the variables of $B$.} We thus obtain a  truth assignment for the variables of $B$  that together with $A$, satisfies all the $m$ clauses.

\textit{From a satisfying assignment to an independent set:} If $I$ admits an affirmative answer, then the corresponding assignment of the variables on $A$ and $B$ will correspond to two   intervals $r_a$ and $r_b$ to the left and right half-planes of $x=0$, respectively.  Since these assignments together satisfy all the clauses, choosing all the strings from $r_a$, and all from $r_b$ except those that intersect the ones from $r_a$, will give an independent set of size $m$.~$\Box$
\end{proof}

\section{A $(4\cdot \log \opt)$-Approximation Algorithm}
\label{sec:approxAlg}
In this section, we give a $(4\cdot \max\{1,\log \opt\})$-approximation algorithm for the $\mis$ problem on the intersection graph of a set of $n$ $\el$-shapes. To this end, we first give a $(\max\{1,\log \opt\})$-approximation algorithm for the problem when the input consist of only $\el$-shapes of type $\ul$. We discuss the generalization of our algorithm to the weighted version of the $\mis$ problem and for approximating the $\mis$ problem on rectangles at the end of this section.

Consider the input $\el$-shapes from left to right in the increasing order of the $x$-coordinate of their vertical segment; we denote the $i$th $\el$-shape in this ordering by $\el_i$. For any $1\leq i<j\leq n$, we define $I[i,j]$ as the set of $\el$-shapes $\el_x$ such that (i) $i\leq x\leq j$, and (ii) $\el_x$ does not intersect the line through the vertical segment of $\el_{j+1}$. We add a dummy $\el$-shape $\el_{n+1}$ far to the right such that no input $\el$-shape intersects the line through the vertical segment of $\el_{n+1}$; thus, $I[1,n]$ is the set of all input $\el$-shapes. Moreover, let $\opt[i,j]$ denote the size of an optimal solution for the $\mis$ problem on the set of $\el$-shapes in $I[i,j]$; we denote $\opt[1,n]$ simply by $\opt$. 
Let $I_k$ denote the set of $\el$-shapes that  intersects the line through the vertical segment of $\el_k$. Moreover, let $\opt(I_k)$ be the size of an optimal solution for the $\mis$ problem on the intersection graph induced by the $\el$-shapes in $I_k$.

We define $S[i,j]$ as the \rev{size of the}  solution returned by our algorithm on the $\el$-shapes in $I[i,j]$, for all $1\leq i<j\leq n$. Initially, for every pair $1\leq i<j\leq n$, if $I[i,j]=\emptyset$, then we set $S[i,j]=0$. Then  for every pair $1\leq i<j\leq n$, we check to see if $\opt[i,j]\leq 4$; if so, then we directly store $\opt[i,j]$ in $S[i,j]$. Otherwise, we compute $S[i,j]$ as follows.
\[
S[i,j]=\max\{\max_{i<k< j}S[i,k-1]+S[k+1,j], \opt(I_k)\}.
\]
The algorithm returns $S[1,n]$ as the solution. Computing the actual solution can be done in the standard manner; to this end, we also store the corresponding value of $k$ in $S[i,j]$.

\paragraph{Approximation factor} To show the approximation factor, let $\opts[i,j]$ be the set of $\el$-shapes in $\opt[i,j]$. If $\opt[i,j]\leq 4$, then we have $S[i,j]=\opt[i,j]$. We now prove by induction that for all $1\leq i<j\leq n$, if $\opt[i,j]>4$, then $S[i,j]\geq \opt[i,j]/\log \opt[i,j]$. Suppose that $S[i,j]\geq \opt[i,j]/\log \opt[i,j]$ for all $1\leq i<j\leq n$ for which $4<\opt[i,j]$. 
Let $k^i_j$ be the index such that $\el_{k^i_j}$ is the median of the $\el$-shapes in $\opts[i,j]$ (i.e., each $\opts[i,k^i_j-1]$ and $\opts[k^i_j+1,j]$ contains at most \rev{$\opts[i,j]/2$ } $\el$-shapes). Notice that
\begin{equation}
\label{eq:outerstring}
\opt(I_{k^i_j})\geq |\opts[i,j]\cap I_{k^i_j}|.
\end{equation}
Now, if $\opt[i,k^i_j-1]\leq 4$, then we know that $S[i,k^i_j-1]=\opt[i,k^i_j-1]$. Otherwise, by the induction hypothesis, we have
\begin{equation}
\label{eq:firstInduction}
S[i,k^i_j-1]\geq \frac{\opt[i,k^i_j-1]}{\log \opt[i,j]/2}\geq \frac{|\opts[i,j]\cap I[i,k^i_j-1]|}{\log \opt[i,j]-1}.
\end{equation}
Similarly, if $\opt[k^i_j+1,j]\leq 4$, then we know that $S[k^i_j+1,j]=\opt[k^i_j+1,j]$. Otherwise, by the induction hypothesis, we have
\begin{equation}
\label{eq:secondInduction}
S[k^i_j+1,j]\geq \frac{\opt[k^i_j+1,j]}{\log \opt[i,j]/2}\geq \frac{|\opts[i,j]\cap I[k^i_j+1,j]|}{\log \opt[i,j]-1}.
\end{equation}

Therefore,
\begin{align*}
S[i,j] & = \max\{\max_{i<k<j} S[i,k-1]+S[k+1,j], \opt(I_k)\} \\
    & \geq \max\{S[i,k^i_j-1]+S[k^i_j+1,j], \opt(I_{k^i_j})\} \\
    & \geq \max\{\frac{|\opts[i,j]\cap I[i,k^i_j-1]|+|\opts[i,j]\cap I[k^i_j+1,j]|}{\log \opt[i,j]-1}, |\opts[i,j]\cap I_{k^i_j}|\}\\
    & \geq \max\{\frac{\opt[i,j]-|\opts[i,j]\cap I_{k^i_j}|}{\log \opt[i,j]-1}, |\opt[i,j]\cap I_{k^i_j}|\}.
\end{align*}

The first inequality is because our algorithm tries all values of $i<k<j$, which includes $k^i_j$. Moreover, the second inequality is because of~\eqref{eq:secondInduction}, \eqref{eq:firstInduction} and~\eqref{eq:outerstring}. Now, if $|\opts[i,j]\cap I_{k^i_j}|\geq \opt[i,j]/\log \opt[i,j]$, then we are done. Otherwise,
\begin{align*}
\frac{\opt[i,j]-|\opts[i,j]\cap I_{k^i_j}|}{\log \opt[i,j]-1} & \geq \frac{\opt[i,j]-\opt[i,j]/\log \opt[i,j]}{\log \opt[i,j]-1} \\
    & =\frac{\opt[i,j]}{\log \opt[i,j]}.
\end{align*}
This completes the proof of the induction step. By setting $i=1$ and $j=n$, we have $S[1,n]\geq \opt/\log \opt$. 

\paragraph{Running time} 
\rev{We determine whether $\opt[i,j]\le 4$ for all $1\le i<j\le n$ in $O(n^5)$ time, as follows. We first enumerate all subsets of $\el$-shapes of size 5. Assume that the $\el$-shapes of such a subset form an independent set and let $\el_p$ and $\el_q$ be the leftmost and rightmost $\el$-shapes. Then $\opt[p,q]$ must be larger than $4$ and we mark the pair $(p,q)$. We now   
enumerate all subsets of $\el$-shapes of size at most 4. If such a subset $Q$ is independent and corresponds to a pair $(i,j)$, then we check whether this pair is already marked. If it is marked, then $\opt[i,j] > 4$. Otherwise, we  set  $S[i,j]$ to be  $\max\{S[i,j],|Q|\}$. Hence at the end of this process, for every $i,j$, if $\opt[i,j]\le 4$, then $S[i,j]$ stores $\opt[i,j]$. } 

We now consider computation of $\opt(I_k)$. 
For a fixed triple $i,j$ and $k$, we can compute $\opt(I_k)$ in $O(n^3)$ time because the corresponding graph is an outerstring graph for which $\mis$ can be solved in $O(n^3)$ time~\cite{DBLP:journals/comgeo/KeilMPV17}. Since there are $O(n)$ choices for $k$ for a fixed pair of $i$ and $j$, and $O(n^2)$ entries in the table for $i$ and $j$, the overall running time of the algorithm is $O(n^6)$. We next show how to improve the running time to $O(n^5)$ by performing the following preprocessing. For a fixed triple $i,j$ and $k$, we first compute $\opt(I_k)$ and store the value in a table $T$, and will then do one look-up when computing the corresponding table entry of $S[i,j]$. To this end, we first note that index $j$ is irrelevant for computing $\opt(I_k)$ because for a fixed $i$ and $k$, the set of $\el$-shapes is the same for all $k<j\leq n$. Therefore, for all pairs $1\leq i<k<n$, we compute $\opt(I_k)$ using the algorithm of Keil et al.~\cite{DBLP:journals/comgeo/KeilMPV17} and store it in $T[i,k]$. Since their algorithm takes $O(n^3)$ and there are $O(n^2)$ entries for $T$, the preprocessing step takes $O(n^5)$ overall time. Consequently, this improves the overall running time of computing the entries of table $S$ to $O(n^3)$ and so we have the following lemma.
\begin{lemma}
\label{lem:logOPTApproximation}
There exists an $O(n^5)$-time $(\max\{1,\log \opt\})$-approximation algorithm for the $\mis$ problem on a set of $n$ $\el$-shapes of type $\ul$, where $\opt$ denotes the size of an optimal solution.
\end{lemma}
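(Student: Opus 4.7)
My plan is to design a dynamic programming algorithm whose subproblems are naturally parameterized by ``intervals'' of $\el$-shapes defined by vertical lines through grounding points. After sorting the input shapes by the $x$-coordinate of their vertical segments, I will define $S[i,j]$ to be the value returned on the set $I[i,j]$ of shapes whose index lies in $[i,j]$ and which do not cross the vertical line through $\el_{j+1}$. The crucial structural observation is that, for any index $k$ with $i<k<j$, the shapes of $I[i,j]$ that \emph{do} cross the vertical line through $\el_k$ form a grounded family on that vertical line, so their intersection graph is an outerstring graph and $\mis$ on them can be computed exactly in $O(n^3)$ time by the algorithm of Keil et al.

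With this in hand, the recurrence has two types of candidate answers: for each splitting index $k$, either (a) take the exact outerstring $\mis$ on shapes crossing $\el_k$'s vertical line, or (b) split the remaining shapes into those fully to the left and fully to the right of that line and recurse. The base case handles $\opt[i,j]\le 4$ exactly. For the approximation analysis I will do strong induction on $\opt[i,j]$, evaluating the recurrence at the particular index $k^{i}_{j}$ which is the \emph{median} of the optimum solution set $\opts[i,j]$. Since the algorithm's max is over \emph{all} splitting indices, choosing this specific $k^{i}_{j}$ only gives a lower bound on $S[i,j]$, which suffices.

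The main analytic obstacle is then a clean two-case argument balancing branches (a) and (b). Let $\mu=|\opts[i,j]\cap I_{k^{i}_{j}}|$ be the number of optimum shapes crossing the median line. If $\mu\ge \opt[i,j]/\log\opt[i,j]$, then branch (a) already delivers the required ratio. Otherwise, the two recursive subproblems each have optimum at most $\opt[i,j]/2$, so by induction each contributes $|\opts[i,j]\cap I[\cdot,\cdot]|/(\log\opt[i,j]-1)$; summing gives $(\opt[i,j]-\mu)/(\log\opt[i,j]-1)$, and a short algebraic calculation using $\mu\le\opt[i,j]/\log\opt[i,j]$ shows this is at least $\opt[i,j]/\log\opt[i,j]$. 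A tiny amount of care is needed around the boundary $\opt\le 4$ so that the induction stays valid where ``$\log\opt[i,j]-1$'' could otherwise cause trouble; the exact base-case handling resolves this.

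For the running time, the naive scheme calls the cubic-time outerstring $\mis$ routine for each triple $(i,j,k)$, giving $O(n^6)$. I will observe that $\opt(I_k)$ depends only on $i$ and $k$, not on $j$: once the line through $\el_k$ is fixed, the candidate shapes are determined by $i$ alone. Precomputing a table $T[i,k]=\opt(I_k)$ over the $O(n^2)$ relevant pairs costs $O(n^5)$ time, after which the main DP computes each entry of $S$ in $O(n)$ time via constant-time table lookups, totalling $O(n^3)$. Hence the overall running time is $O(n^5)$, matching the claim of the lemma.
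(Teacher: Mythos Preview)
Your proposal is correct and follows essentially the same approach as the paper: the same interval-based dynamic program $S[i,j]$, the same two-branch recurrence combining an exact outerstring-$\mis$ computation on shapes stabbed by $\el_k$'s vertical line with recursion on the two sides, the same base case $\opt[i,j]\le 4$, the same median-index induction with the two-case balancing argument on $\mu=|\opts[i,j]\cap I_{k^i_j}|$, and the same $O(n^5)$ running-time improvement via the precomputed table $T[i,k]$ exploiting that $\opt(I_k)$ is independent of $j$.
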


When the input consists of all four types of $\el$-shapes, we run the algorithm of Lemma~\ref{lem:logOPTApproximation} four times (once for each type of the input $\el$-shapes), and then return the largest solution as the final answer. Clearly, this gives us a $(4\cdot \log \opt)$-approximation algorithm for the original problem and so we have the main result of this section.
\begin{theorem}
\label{thm:logOPTApproximation}
There exists an $O(n^5)$-time $(4\cdot \max\{1,\log \opt\})$-approximation algorithm for the $\mis$ problem on any set of $n$ $\el$-shapes, where $\opt$ denotes the size of an optimal solution.
\end{theorem}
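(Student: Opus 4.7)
The plan is to upgrade Lemma~\ref{lem:logOPTApproximation} to the full theorem via two essentially independent extensions that plug directly into the existing dynamic program, namely allowing all four orientations of $\el$-shapes and allowing arbitrary non-negative weights.

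For the orientation extension I would partition the input into the four orientation classes $\mathcal{L}_{\ul}, \mathcal{L}_{\ur}, \mathcal{L}_{\llc}, \mathcal{L}_{\lr}$; each becomes a type-$\ul$ instance after a suitable reflection of the plane, so the algorithm of Lemma~\ref{lem:logOPTApproximation} applies to each. If $\opts$ is an optimum of the full instance, then by pigeonhole some orientation class $t^{\ast}$ satisfies $w(\mathcal{L}_{t^{\ast}} \cap \opts) \ge \opt/4$, so returning the best of the four runs loses at most an additional factor of $4$ on top of whatever single-type guarantee we establish. Each run costs $O(n^5)$, so the total running time is preserved.

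For the weight extension, the recurrence
\[
S[i,j] = \max\Bigl\{\max_{i<k<j}\bigl(S[i,k-1] + S[k+1,j]\bigr),\; \opt(I_k)\Bigr\}
\]
is kept intact, but $\opt(I_k)$ now denotes the weighted maximum independent set of the outerstring graph induced by $I_k$, which is still computable in $O(n^3)$ by the weighted variant of the Keil et al.\ algorithm, keeping both the preprocessing of $T[i,k]$ in $O(n^5)$ and the subsequent filling of $S$ in $O(n^3)$. The invariant I would prove is uniform: for every independent set $X$ of $I[i,j]$ with $|X|>4$, $S[i,j] \ge w(X)/\log|X|$. The inductive step splits $X$ at its cardinality-median $\el_{k^{\ast}}$ into $X\cap I[i,k^{\ast}-1]$, $X\cap I[k^{\ast}+1,j]$, and $X\cap I_{k^{\ast}}$; the first two pieces have cardinality at most $|X|/2$, so induction contributes $(w(X)-w(X\cap I_{k^{\ast}}))/(\log|X|-1)$ through the $S$-term, while the third piece is an independent set of $I_{k^{\ast}}$ and is dominated by $\opt(I_{k^{\ast}})$. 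The two-case analysis of Lemma~\ref{lem:logOPTApproximation} — the good case if $w(X\cap I_{k^{\ast}})\ge w(X)/\log|X|$, the algebraic case otherwise — closes the induction, and the base cases $|X|\le 4$ are handled exactly as in the unweighted algorithm by directly enumerating independent subsets of cardinality at most four.

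The main subtlety I anticipate is reconciling the $\log|\opts|$ factor produced by this cardinality-driven induction with the $\log\opt$ factor stated in the theorem. If $\opt$ is read as the cardinality of an optimum — the natural reading that matches the unweighted case and the comparison to Chalermsook and Chuzhoy — the two bounds coincide. If $\opt$ must denote total weight, a preliminary rescaling so that each weight is at least $1$ forces $|\opts|\le\opt$ and hence $\log|\opts|\le\log\opt$, closing the gap. Composing this single-type guarantee with the factor $4$ from the orientation reduction then yields the claimed $(4\cdot\max\{1,\log\opt\})$-approximation in $O(n^5)$ time.
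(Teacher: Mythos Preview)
Your orientation reduction is exactly what the paper does: the paper's own proof of Theorem~\ref{thm:logOPTApproximation} consists solely of the pigeonhole over the four types $\{\ul,\ur,\llc,\lr\}$ and an appeal to Lemma~\ref{lem:logOPTApproximation}, so that part matches verbatim.

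Your treatment of weights, however, takes a genuinely different route from the paper. The paper (in the proof of Theorem~\ref{thm:logOPTApproximationGneralization}) keeps the analysis of Lemma~\ref{lem:logOPTApproximation} but replaces the cardinality median by the \emph{weighted} median of $\opts[i,j]$, so that each recursive piece carries at most half the \emph{weight}; the induction then runs directly on $\opt[i,j]$ (interpreted as total weight, with weights assumed to be at least $1$). You instead keep the cardinality median, strengthen the inductive invariant to ``for every independent set $X\subseteq I[i,j]$ with $|X|>4$, $S[i,j]\ge w(X)/\log|X|$'', and close the gap at the end via $|\opts|\le \opt$ under the $\ge 1$ weight normalisation. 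Both arguments are correct; yours has the mild advantage that the approximation factor it actually proves is $\log|\opts|\le \log\opt$, i.e.\ a slightly tighter bound, while the paper's weighted-median route avoids the universally quantified invariant and the separate rescaling discussion. One small caution: your base case requires $S[i,j]\ge w(X)$ for every independent $X$ with $|X|\le 4$, which your enumeration step supplies but whose cost you should make sure still fits inside $O(n^5)$; the paper's base case (``$\opt[i,j]\le 4$'') has the same issue and is equally informal about it.
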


\paragraph{Generalizations} Our algorithm can be generalized in two ways: for the weighted version of the $\mis$ problem on $\el$-shapes, and for the weighted $\mis$ problem on axis-parallel rectangles.
\begin{theorem}
\label{thm:logOPTApproximationGneralization}
There exists an $O(n^5)$-time $(4\cdot \max\{1,\log \opt\})$-approximation algorithm (resp., an $O(n^3)$-time $(\max\{1,\log \opt\})$-approximation algorithm) for the weighted $\mis$ problem on any set of $n$ $\el$-shapes (resp., a set of $n$ axis-parallel rectangles in the plane), where $\opt$ is the size of an optimal solution.
\end{theorem}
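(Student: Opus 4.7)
The plan is to lift the algorithm and analysis of Lemma~\ref{lem:logOPTApproximation} to both more general settings with only small modifications. For the weighted $\mis$ problem on $\el$-shapes, I would reinterpret every cardinality $|\cdot|$ throughout the proof as the total weight of the indicated set, and treat $\opt[\cdot,\cdot]$ as the total weight of an optimum solution. The DP recurrence is unchanged. The key change is in the choice of the split index $k^i_j$: I would pick it as a weighted median of $\opts[i,j]$, meaning an $\el$-shape in $\opts[i,j]$ such that the total weight of $\opts[i,j] \cap I[i, k^i_j - 1]$ and of $\opts[i,j] \cap I[k^i_j + 1, j]$ are each at most half of $w(\opts[i,j])$. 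With this change, every inequality in the original induction carries over, since each step is an arithmetic manipulation that does not rely on the semantics of weight versus count. The preprocessing step uses the weighted variant of the outerstring $\mis$ algorithm of Keil et al.~\cite{DBLP:journals/comgeo/KeilMPV17}, which still runs in $O(n^3)$ time, preserving the $O(n^5)$ overall running time. Running the procedure once per $\el$-shape type and returning the best solution yields the desired $(4 \cdot \max\{1,\log\opt\})$-approximation.

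For axis-parallel rectangles, I would order the rectangles by the $x$-coordinate of their left edge and define $I[i,j]$ and $I_k$ analogously, with ``crosses the vertical line through rectangle $k$'' replacing ``intersects the line through the vertical segment of $\el_k$''. The decisive structural observation is that any set of rectangles all crossing a common vertical line forms an interval graph in the $y$-dimension: two such rectangles intersect iff their $y$-projections overlap. Weighted $\mis$ on an interval graph is solvable in $O(n\log n)$ time by standard techniques, which slots into the preprocessing and reduces its cost from $O(n^5)$ to $O(n^3)$ (there are $O(n^2)$ relevant pairs $(i,k)$, and each query costs $O(n)$ given a presorted structure). The rest of the DP and its inductive analysis transfer verbatim, yielding a $\max\{1,\log\opt\}$-approximation. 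Since rectangles are a single class (unlike the four $\el$-shape types), no factor of $4$ is incurred.

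The main obstacle is handling the case where a single element of $\opts[i,j]$ carries more than half of the total weight. In the unweighted case, the median contributes only $1$ to $\opt(I_{k^i_j})$, so the balance argument is immediate; in the weighted case, the weighted median could itself dominate. This is precisely the case where the $\opt(I_k)$ branch of the max already suffices: because $\el_{k^i_j} \in I_{k^i_j}$, we have $\opt(I_{k^i_j}) \ge w(\el_{k^i_j}) > w(\opts[i,j])/2$, which comfortably beats the target bound $w(\opts[i,j])/\log w(\opts[i,j])$. In the complementary balanced case, where both sides of the median have weight at most $w(\opts[i,j])/2$, the induction goes through exactly as before with weights replacing counts.
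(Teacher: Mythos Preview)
Your proposal is correct and follows essentially the same route as the paper: weighted median for the split index, the weighted outerstring algorithm of Keil et al.\ for the $I_k$ subproblems, and the interval-graph observation for rectangles yielding the $O(n^3)$ bound with no factor of~$4$. Your explicit treatment of the heavy-median case (where a single element carries more than half the weight) is in fact more careful than the paper's own proof, which simply asserts that the weighted-median argument goes through.
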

\begin{proof}
Suppose that each $\el$-shape has a weight, that is greater than or equal to 1. To apply our algorithm, we now use the ``weighted'' median of the $\el$-shapes in $\opt[i,j]$. Moreover, the algorithm of Keil et al.~\cite{DBLP:journals/comgeo/KeilMPV17} for the $\mis$ problem on outerstring graphs works for weighted outerstring graphs as well. Finally, we can still compute the optimal solution for the weighted $\mis$ problem when $\opt[i,j]\leq 4$ for all \rev{$1\leq i<j\leq n$}. Hence, we have an $O(n^5)$-time $(4\cdot \max\{1,\log \opt\})$-approximation algorithm for the weighted $\mis$ problem.

Next, we show that our algorithm can also be applied to get a $(\log \opt)$-approximation algorithm for the \emph{weighted} $\mis$ on the intersection graph of a set of axis-parallel rectangles in the plane. To see this, we sort the rectangles from left to right by the increasing order of the $x$-coordinate of their left sides, and consider the weighted median. Moreover, we can still compute the optimal solution for the weighted $\mis$ problem when $\opt[i,j]\leq 4$ for all $1\leq i<j\leq j$. To solve the $\mis$ problem on the rectangles in $I_k$, notice that the intersection graph induced by the rectangles in $I_k$ is equivalent to the interval graph obtained by projecting each rectangle of $I_k$ onto the vertical line through the left side of $R_k$, the $k$th rectangle in the ordering. Hence, we can solve the weighted $\mis$ on the rectangles in $I_k$ in $O(n)$ time (given an ordering of these rectangles). The latter improves the overall running time of the algorithm in Theorem~\ref{thm:logOPTApproximation} to $O(n^3)$ because we can now compute all the entries of table $T$ in $O(n^3)$ overall time. Finally, since we have only one type of input rectangles, we do not need to apply our algorithm four times in the case of rectangles and so we have a $(\log \opt)$-approximation algorithm. This completes the proof of the theorem.~$\Box$
\end{proof}

We note that for the case of rectangles, our $(\max\{1,\log \opt\})$-approximation algorithm provides a somewhat simpler algorithm than the one that can be obtained (with the same approximation factor) from the $O(\log n/\log \log n)$-approximation algorithm of Chan and Har-Peled~\cite{DBLP:journals/dcg/ChanH12}.

\section{Conclusion}
\label{sec:conclusion}
In this paper, we studied the fine-grained complexity and approximability of the $\mis$ problem on outerstring graphs and their relatives. Our work gives rise to some natural open questions:
\begin{enumerate}
\item Does there exist a quadratic-time algorithm that can solve the $\mis$ problem on grounded segment or grounded square-$\el$ graphs?
\item Can we improve the approximation factor of the algorithm of Theorem~\ref{thm:logOPTApproximation}?
\item Can we find an $\Omega(n^{2-\epsilon})$-time lower bound under SETH for finding $\mis$ in grounded segment representations?
\end{enumerate}

\rev{
\subsection*{Acknowledgements} We thank the anonymous reviewers for their helpful comments and suggestions, which
improved the presentation of the paper.}

\section*{References}
\bibliography{ref}

\end{document}